\newcommand{\be}{\begin{eqnarray}}
\newcommand{\ee}{\end{eqnarray}}
\newcommand{\ba}{\begin{eqnarray*}}
\newcommand{\ea}{\end{eqnarray*}}
\newtheorem{theorem}{Theorem}[section]
\newtheorem{lemma}[theorem]{Lemma}
\newtheorem{remark}{Remark}[section]
\newcolumntype{L}[1]{>{\raggedright\let\newline\\\arraybackslash\hspace{0pt}}m{#1}}
\newcolumntype{C}[1]{>{\centering\let\newline\\\arraybackslash\hspace{0pt}}m{#1}}
\newcolumntype{R}[1]{>{\raggedleft\let\newline\\\arraybackslash\hspace{0pt}}m{#1}}
\begin{document}

 
            
            
            
 
             

             
        




\title{Social Sensors in Epidemiological Networks via Graph Eigenvectors}
\author{Shubhajit Sen, Samhita Pal, and Srijan Sengupta \footnote{Shubhajit Sen (\url{ssen8@ncsu.edu}) and Samhita Pal (\url{spal4@ncsu.edu}) are Ph.D. students in the Department of Statistics at North Carolina State University. Srijan Sengupta (\url{ssengup2@ncsu.edu}) is an Assistant Professor in the Department of Statistics at North Carolina State University. Shubhajit and Samhita contributed equally to the manuscript.}}

\bibliographystyle{plainnat}
\date{}
\maketitle

{\textbf{Abstract:} In this paper, we consider {{epidemiological networks}} which are used for modeling the transmission of contagious diseases  through a population. 
Specifically, we study the so-called \textit{social sensors} problem: given an epidemiological network, can we find a small set of nodes such that by monitoring disease transmission on these nodes, we can get ahead of the overall epidemic in the full population? In spite of its societal relevance, there has not been much statistical work on this problem, and we aim to provide an exposition that will hopefully stimulate interest in the research community.
Furthermore,  by leveraging classical results in spectral graph theory, we propose a novel method for finding social sensors, which achieves substantial improvement over existing methods in both synthetic and real-world epidemiological networks.}
 
\newpage

\section{Introduction}

In mathematical or statistical modeling, relational structures are often represented by networks, which is a set of objects (known as vertices), and the connections between any pair of the objects (known as edges). Thus, a network is fully characterized by two sets, set of vertices, and the set of edges.
Scientific research on networks has a long and rich history, going back almost four centuries to Euler's famous paper on the Seven Bridges of K{ö}nigsberg \citep{euler1741solutio}.
Today, Network Science is a rapidly growing multidisciplinary scientific paradigm,
drawing on theory and methods from mathematics, physics, computer science and statistics, with prominent applications in social sciences, economics, psychology, political science, engineering sciences, and biological sciences \citep{watts1998collective,barabasi1999emergence,adamic2005political,albert2002statistical,girvan2002community}. 
Fittingly, the last two decades have seen tremendous progress in developing statistical inference methods for network data.
This includes extensive work on community detection \citep{bickel2009nonparametric,zhao2012consistency,rohe2011spectral}, model fitting/ selection \citep{hoff2002latent,handcock2007model,krivitsky2009representing, wang2017likelihood,yan2014model,bickel2015hypothesis}, hypothesis testing \citep{ghoshdastidar2018practical,tang2017semiparametric,tang2017nonparametric}, and anomaly detection \citep{zhao2018performance,sengupta2018anomaly,komolafe2017statistical}.

In this paper, we consider {\textit{epidemiological networks}} which are used for modeling the transmission of contagious diseases  through a population \citep{keeling2005implications, bengtsson2015using, kramer2016spatial,leitch2019toward}. Here, each node represents an individual and each edge connecting a pair of nodes represents social contact with potential for pathogen transmission. We can then model a spreading process occurring on the network where contagion moves from infected nodes to non-infected nodes, using various disease models (e.g., SIR, SIS). 
Statistical inference methods are used on such networks  to predict disease transmission, estimate the epidemic threshold, identify critical hotspots, and ascertain the effect of community structure \cite{bengtsson2015using,kramer2016spatial, Boguna2002EpidemicNetworks,Wang2003EpidemicViewpoint, Chakrabarti2008,Prakash2010,Castellano2010,Nadini2018}.


Specifically, we consider the following problem: given an epidemiological network, can we find a small set of nodes such that by monitoring disease transmission on these nodes, we can get ahead of the overall epidemic in the full population?
In this paper, we consider this question from a statistical perspective.
Following \cite{fowler} and \cite{shao2016forecasting}, we call this problem as the ``social sensors" problem, as the nodes being monitored are analogous to sensors that alert us ahead of time.

Our goal in this paper is two-fold.
First, in spite of its societal relevance, there has not been much statistical work on the ``social sensors'' problem, and we aim to provide an exposition that will hopefully stimulate interest in the research community.
Second,  by leveraging classical results in spectral graph theory, we propose a novel method for finding social sensors which is a substantial improvement over existing methods.

The rest of the paper is organized as follows.
In \cref{sec:exposition}, we provide a review of existing methods for the social sensor problem.
In \cref{sec:methodology}, we propose a new method based on spectral properties of the graph.
In \cref{sec:sim} and \cref{sec:realdata}, we report numerical results on synthetic and real-world epidemiological networks, respectively, and in \cref{sec:disc}, we conclude the paper with discussion and next steps.


\section{The social sensors problem in epidemiological networks}
\label{sec:exposition}

The basic deterministic models for the transmission of infectious disease are the compartmental models. These include a wide range of models such as \textbf{SI} (Susceptible - Infected), \textbf{SIS} (Susceptible - Infected - Susceptible), \textbf{SIR} (Susceptible - Infected - Recovered), \textbf{SEIS} (Susceptible - Exposed - Infected - Susceptible), \textbf{SEIR} (Susceptible - Exposed - Infected - Recovered) etc. The numbers of susceptible, exposed, infected and recovered individuals at time $t$ are represented respectively by $S(t)$, $E(t)$, $I(t)$, and $R(t)$. The compartmental models study the rate of change of these numbers over time, assuming linear transitions from one compartment to the other, where the transition rates are taken as model parameters.


Network analysis has been used as an
analytical tool to describe the evolution and spread of epidemics in societies. When networks are used for epidemiological
purposes, edges are included if
they describe relationships capable of permitting the transfer of infection. Such a social network is usually undirected and can be considered to be fixed or can be adapted to allow random mixing among actors to some extent. In general, a network with $n$ nodes and $m$ edges connecting the vertices is denoted by G($n$,$m$). Since computing over sets is usually inconvenient, but computing over numerical arrays is easy, we often prefer to represent graphs as matrices. To do so, we first need to fix an ordering of the nodes. Then the adjacency matrix $A$ is an $n \times n$ matrix satisfying \begin{align*}
    A_{ij} & = 
    \begin{cases}
      1, & \text{when there is an edge between nodes i and j. } , \\
      0, & \text{o.w. }.
    \end{cases}
\end{align*}
and $A$ is symmetric for undirected networks. 


One might ask whether we could leverage some information from the social network in order to predict some features of the transmission in the epidemiological network. An important problem in public health surveillance domain would be to forecast some properties of the infection curve, so that some containment policies could be taken by the authorities for prevention, or at least to have some lead time to face it.

\subsection{Monitoring the friends of randomly selected individuals}

One of the earliest attempts to solve the aforementioned problem was by \cite{fowler}. They first introduced the notion of a sensor set, i.e. a subset of the set of individuals (vertices) from the original network. Their idea was to monitor this sensor set to detect contagious outbreaks before they occur in the population at large. Now, the next problem is to come up with a feasible method to choose the sensor set. In this regard, they used the underlying social network structure. They argued that during an outbreak, nodes at the center of the network are more likely to be infected sooner. Hence, choosing central individuals as the sensor set might provide the information about the outbreak in advance. However, it might be costly, and time consuming to collect the information about the entire large network. So they proposed an alternative method that does not require to do that. They leverage an interesting property of a social network, that says, on average your friends have more friends than you do (\cite{friends1}). In more formal words, friends of a randomly selected individuals in a social networks are more central (i.e. in general have higher degrees, higher betweenness centrality etc.) than the randomly chosen individuals. So their proposed strategy was \textit{to monitor the friends nominated by the randomly chosen individuals} as the sensor set. Note that in an epidemic outbreak, these nominated friends are expected to be infected sooner. From here on we refer this method as FOS approach.

They evaluated this FOS method in a flu outbreak in Harvard College in the fall of 2009. As expected there was a shift in the S-shaped cumulative incidence curve, and the daily incidence curve, detecting a significant amount of lead time. Moreover, it was observed that the friend group exhibited higher in-degree (number of times an individual was nominated by someone as a friend), higher centrality (number of shortest paths between two nodes in the network that pass through an individual), higher coreness (number of friends an individual has when individuals with lowest degrees are iteratively removed), and lower transitivity (the probability that two of one's friends are friends with one another). Moreover, the aforementioned features were also used to construct the sensor set alternatively, but none of those parameters provided significant improvement in terms of the lead time than the method originally proposed. Rather, computation of these parameters required entire information about the network structure, in contrast with the monitoring the friends method, which requires the information on the sample collected only.


This work by \cite{fowler} was one of the earliest attempt to identify the social sensor in epidemic outbreak. Although the importance of central individuals during an outbreak was not unknown (\cite{PhysRevLett.91.247901}), the credit for introducing the notion of sensors for an early detection of an outbreak goes to them. Moreover, they applied an interesting but easily comprehensible property of a social network in order to identify the sensor group that even does not require the information about the entire network. 

Despite the merits of this method, there are certain drawbacks which one should address. First of all, this method lacks mathematical rigor. Although it describes certain important properties of a social network in terms of the degree distribution, how that helps to identify the sensors in an epidemiological network is not discussed mathematically, only intuition was provided. Apart from this, this method can't predict the lead time which could be of real importance in disease management. Moreover, this method might not always provide a lead time as noted by \cite{shao2016forecasting}. It was shown that this method works better in a star like topology compared to a network where the degree does not follow a scale-free distribution. 

\subsection{Designing Social Network Sensors for Epidemics}

\cite{shao2016forecasting} suggest another way of identifying social sensors for early detection of a contagious epidemic. They noted that networks with star-like topology where a few of the central nodes have very large degrees, perform relatively better under the `Friends of Friend' approach as this graph structure facilitates inclusion of central nodes with high degrees to form the sensor group. On the other hand, in networks where the total number of nodes is large with an average number of edges spread across the network, it is difficult for the `Friends of Friend' approach to select sensors that will represent the entire graph based only on local friend-friend information. To tackle this kind of situation, they base their sensor selection technique on the objective of choosing the smallest group $S$ so that at least some nodes in $S$ contract the disease within the first $d$ days of the outbreak with probability at least $\epsilon$. This can be done by the \textbf{PLTM} (\textit{Peak Lead Time Maximization}) method.
\begin{align}
\label{eq:pltm}
    &S = \arg \max_{S} E[t_{pk}-t_{pk}(S)]\\
    &s.t. \enskip f(S) \geq \epsilon , |S| = k
\end{align}
where $t_{pk} = \arg \max_{t}I(t)$ and $t_{pk}(S)$ denotes the time of peak of the entire network and the sensor set respectively, $f(S)$ is the probability that at least one node in $S$ is infected, assuming that the disease spread started from a random initial node. However, this optimization problem is non-submodular. \cite{leskovec2007cost} developed a greedy algorithm that adds a sensor that maximizes the marginal gain based on expected penalty reduction, where a penalty is incurred depending on the time of detection and impact on the whole network before detection. Following this method, \cite{shao2016forecasting} consider a different, although related method with the aim of reaching a sub-modular optimization problem after defining $t_{inf}(v)$ as the expected infection time for node $v$,
\begin{align}
\label{eq:mait}
    & S = \arg \min_{S} \sum_{v \in S} t_{inf}(v)/|S|\\
    &s.t. \enskip f(S) \geq \epsilon, |S| = k.
\end{align}
The second method is submodular, but non-linear and as a result existing greedy approaches for maximizing submodular functions do not work directly. The authors propose two faster greedy approaches that picks nodes in non-decreasing $t_{inf}(.)$ order until $S$ has $f(S) \geq \epsilon$, namely, \textit{Transmission Tree} (\textbf{TT}) based sensors heuristic and \textit{Dominator Tree} (\textbf{DT}) based sensor heuristic. The TT based sensor selection heuristic first generates subgraphs of the whole network (called dendrograms) that contain infected nodes and edges through which the disease is transmitted and the depth of each node ($v$) is computed if the node gets infected in a dendrogram. This is done for all the generated dendrograms and the average of all such depths gives $t_{inf}(v)$. The heuristic then discards nodes for which $t_{inf}(v)$ is smaller than a specified value and from among the rest selects the first $k$ nodes with smallest $t_{inf}$ values. The DT based sensor selection heuristic, on the other hand, follows the exact same steps, except that the average depth of a node $v$ is now computed from a dominator tree generated from each dendrogram, where a node $x$ is said to dominate node $y$ in a directed graph if and only if all paths from the source node of infection to node $y$ has to pass through $x$.

Experimental studies on a star-like network of Oregon route-views and social contact networks for six large cities in the US show that the TT and DT approaches perform quite better than the algorithm proposed in \cite{fowler}, but the `Friends of friend' approach still works better in the Oregon network. Moreover, for the TT and DT approaches, observing the whole network is essential for selecting the sensor group that gives substantial lead time in detection. Although, \cite{shao2016forecasting} also do not give an estimate of the lead time, however, they empirically show the stability of the lead time with increasing monitoring days. Also, their methods have high variance in lead time for smaller sensor set sizes, but it steadily falls as $k$ increases.

\section{Proposed methodology}
\label{sec:methodology}
As mentioned before, the FOS approach exploits the centrality of the nodes of the graph in order to construct the sensor set in a epidemiological network. Mathematically, it can be explained by looking at the probability of being selected in the sensor set of a node. In an undirected graph with $n$ nodes with $d_j$ being the degree of the $j$-th node, this is given by the following \Cref{eq:FOS}.
    \begin{align}
    \nonumber
        &P(\text{node j is selected in sensor set})\\ \nonumber
        &= P(\text{at least one of the neighbours of node j is selected in the random sample})\\\nonumber
        &= 1 - P(\text{none of  the neighbours of node j are selected in the random sample})\\
        &= 1 - \frac{{{n-d_j} \choose k}}{{n \choose k}}
        \label{eq:FOS}
    \end{align}
This clearly shows that higher the degree is, higher is the probability of being selected in the sensor set. In this sense, this method utilizes the degree centrality of a network. However this centrality measure does not take into account the importance of the neighbors of an individual while determining its centrality. For example, a node with all of neighbors with degree $1$ would be assigned the same score in terms of the centrality as the one with same number of neighbors but some of them having degree more than $1$. To remedy this we propose a similar method that uses the eigenvector centrality. 

\subsection[EV Approach]{Eigenvector of the adjacency matrix (EV) approach}

The fundamental premise of the notion of the eigenvector centrality is, \textit{a node is important if it is neighbor to other important nodes.} This is in some sense an inductive concept. However, mathematically this can be expressed precisely. In this method relative centrality scores $\{v_1,\dots,v_n\}$ are assigned to all nodes in the network based on the concept that connections to high-scoring nodes should contribute more to the score of the node in question. This can be done recursively by initially taking $v_i^{(1)} = 1 \enskip \forall i=1,\dots,n$ and then defining a node importance at step $t+1$ as a function of the node importance of its neighbors at step $t$ as follows.
\begin{equation}
        v_i^{(t+1)} = \frac{1}{\lambda}\sum_{j}A_{ij}v_j^{(t)} \iff \lambda\textbf{v}^{(t+1)} = A\textbf{v}^{(t)},
        \label{eq:ev}
\end{equation} 
Here $\lambda$ is used for down-weighing the scores and facilitating convergence to the centrality scores. We also assume the network to be connected. Assuming the convergence of \Cref{eq:ev}, note that any eigenvector of the adjacency matrix could be the limiting value. However, the following theorem asserts the unique convergence of the given equation.
\begin{theorem}
\label{th:pf_theorem}
\textbf{(Perron-Frobenius Theorem)} Let ${\displaystyle A=(a_{ij})}$ be an ${\displaystyle n\times n}$ positive matrix, (i.e. $a_{{ij}}>0$ for ${\displaystyle 1\leq i,j\leq n}$). Then the following statements hold.
\begin{itemize}
    \item There is a positive real number $r$, known as the Perron-Frobenius eigenvalue , such that $r$ is an eigenvalue of $A$, and for any other eigenvalue $\lambda$ of $A$, its absolute value is strictly lesser than $r$. 
    \item Let $\vec{v}$ be the eigenvector corresponding to the eigenvalue $r$. Then all the components of $\vec{v}$ are positive, i.e. $v_i>0$, for $1\leq i\leq n$. Moreover there are no other positive eigenvectors of $A$ except for the positive multiples of $A$.
\end{itemize}
\end{theorem}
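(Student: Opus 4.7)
The plan is to establish the three claims---existence of a positive eigenvalue-eigenvector pair, strict spectral dominance, and uniqueness---in sequence, exploiting the strict positivity of every entry of $A$ at each stage.

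First, for existence, I would apply Brouwer's fixed point theorem to the self-map $T:\Delta\to\Delta$ of the standard simplex $\Delta=\{x\in\mathbb{R}^n:x_i\ge 0,\ \sum_i x_i=1\}$ given by $T(x)=Ax/\|Ax\|_1$. The map is well-defined and continuous because strict positivity of $A$ forces $Ax$ to have strictly positive entries for every $x\in\Delta$ (any such $x$ has at least one positive coordinate, which $A$ then ``spreads'' to every output coordinate). A fixed point $v$ then satisfies $Av=rv$ with $r=\|Av\|_1>0$, and positivity of $Av$ immediately gives $v_i>0$ for all $i$.

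Second, for the dominance claim, I would run a Collatz--Wielandt style comparison. If $Aw=\lambda w$ with $w\ne 0$, taking entrywise moduli and applying the triangle inequality yields $|\lambda|\,|w|\le A|w|$ componentwise. Comparing $|w|$ to the positive eigenvector $v$ from the previous step---for instance by considering the largest $c>0$ such that $cv\le|w|$ and iterating with $A$---yields $|\lambda|\le r$. For the strict inequality when $\lambda\ne r$, the key observation is the equality case: if $|\lambda|=r$, then for each $i$ the complex numbers $\{A_{ij}w_j\}_j$ must share a common phase, and since each $A_{ij}$ is real and strictly positive, this forces all coordinates of $w$ to share a single phase, so $w=e^{i\theta}\tilde w$ for a real vector $\tilde w$ with $A|\tilde w|=r|\tilde w|$. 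Combined with the uniqueness step below, this collapses to $\lambda=r$.

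Third, for uniqueness up to positive scaling, I would argue by contradiction: suppose $u>0$ is another eigenvector, necessarily with eigenvalue $r$ by the previous step. Pick $t>0$ maximal so that $v-tu\ge 0$ componentwise; by maximality $v-tu$ has at least one zero entry, and it is nonzero unless $v=tu$ (in which case we are done). But $A(v-tu)=r(v-tu)$, and the left-hand side has all strictly positive entries because $A$ is a positive matrix acting on a nonzero nonnegative vector, contradicting the presence of a zero entry on the right. The principal obstacle is the strict dominance $|\lambda|<r$: the weak inequality drops out of the comparison, but ruling out $|\lambda|=r$ with $\lambda\ne r$ requires tracking the equality case of the triangle inequality carefully and using that every $A_{ij}$ is strictly positive rather than merely nonnegative---this is exactly where the hypothesis $a_{ij}>0$ does the essential work distinguishing Perron's setting from the broader nonnegative-matrix statement.
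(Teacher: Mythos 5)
The paper itself offers no proof of this statement: it is quoted as the classical Perron(--Frobenius) theorem for strictly positive matrices and used as a black box, so there is no internal argument to compare against. Your outline is the standard textbook route (Brouwer's fixed point theorem on the simplex for existence of a positive pair $(r,v)$, a comparison argument for $|\lambda|\le r$, the equality case of the triangle inequality for strictness, and the ``largest $t$ with $v-tu\ge 0$'' argument for uniqueness), and most of it is sound; the existence and uniqueness steps in particular are complete in outline.

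There is, however, one genuine circular gap. In the dominance step you handle $|\lambda|=r$ by reducing to a real vector $\tilde w$ with $A|\tilde w|=r|\tilde w|$ and then write that ``combined with the uniqueness step below, this collapses to $\lambda=r$''; in the uniqueness step you begin with ``suppose $u>0$ is another eigenvector, necessarily with eigenvalue $r$ by the previous step.'' But strict dominance only gives $|\mu|\le r$ for the eigenvalue $\mu$ of a positive eigenvector $u$; it does not exclude $0<\mu<r$, so neither step actually proves that a positive eigenvector must have eigenvalue $r$, and each defers exactly that point to the other. The standard repair is to apply your existence argument to $A^{T}$ to get a strictly positive left eigenvector $\phi$ with $\phi^{T}A=r\phi^{T}$ (the Perron roots of $A$ and $A^{T}$ agree, since both matrices have the same spectrum and each root is its spectral radius by the weak bound); then $\mu\,\phi^{T}u=\phi^{T}Au=r\,\phi^{T}u$ with $\phi^{T}u>0$ forces $\mu=r$. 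Alternatively, sandwich $\epsilon v\le u\le Cv$ (possible since both vectors are strictly positive) and iterate $A^{k}$. Two smaller repairs: in the weak bound $|\lambda|\le r$ your comparison constant is oriented the wrong way --- the largest $c>0$ with $cv\le|w|$ need not exist when $|w|$ has a zero entry; you want the smallest $c$ with $|w|\le cv$ and then iterate $|\lambda|^{k}|w|\le A^{k}|w|\le cr^{k}v$. And before invoking equality in the triangle inequality you must first show that $|\lambda|=r$ forces the componentwise equality $A|w|=r|w|$ (if $z=A|w|-r|w|\ge 0$ were nonzero, then $Az>0$ gives $y=A|w|>0$ with $Ay\ge (r+\epsilon)y$ for some $\epsilon>0$, contradicting the upper bound obtained by comparison of $y$ with $v$); this also yields $|w|>0$, so the common-phase argument applies in every coordinate.
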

Note that obtaining eigenvector centrality scores include summing only nonnegative real numbers and hence score for a node can not be negative. Hence by \Cref{th:pf_theorem}, the only solution to which the above recurrence relation converges, is the largest eigenvector of the adjancency matrix of the graph $A$. Hence, we propose the following method leveraging the eigenvector centrality of the nodes of the underlying social network.
Select the first $k$ nodes based on the eigenvector centrality score (i.e. for node $i$, the $i$-th element of the eigenvector corresponding to the largest eigenvalue would be the score) in the sensor set. $k$ is the parameter of this method. Choice of $k$ is certainly is an important task, since choosing $k$ to be very high or very low might lead to the reduction in the lead time. However, in this write-up we would not focus on this. Instead, for the sake of comparison, the $k$ is taken as the size of the sensor set chosen by the FOS approach.
\subsection[NEV Appraoch]{Eigenvector of the column normalized adjacency matrix (NEV) approach}

This approach can be interpreted as a direct extension of the FOS approach. In FOS approach, more central nodes are selected by choosing the friends of a random sample. But one might wonder what would happen if this is repeatedly done. Would that lead to more central nodes? In this approach we have investigated the answer to this question. For the sake of simplicity start with selecting one node randomly from the graph. Then at each step we select one individual randomly from the friends of the individual selected in the previous step. This transition can be interpreted as a discrete time markov chain $\left\{X_i\right\}_{i\in \mathbb{N}}$ with the state space being the nodes of the network in consideration. We say that the chain in at state $i$ at time $t$ if the $i$-th node is selected at time $t$. Next consider the following lemma and the theorem.

\begin{lemma}
\label{lma:aperidocity}
Let's assume that the network in consideration has at least one odd cycle. Then the aforementioned markov chain is aperiodic and irreducible.
\end{lemma}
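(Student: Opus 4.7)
The plan is to recognize the process $\{X_i\}$ as the simple random walk on $G$, with one-step transition probabilities $P_{ij} = A_{ij}/d_i$, and then to verify irreducibility and aperiodicity separately, while explicitly carrying along the connectedness assumption from earlier in \Cref{sec:methodology} (without it, irreducibility already fails, so the lemma is really stated under that standing hypothesis).

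For irreducibility I would use connectedness directly: for any ordered pair of nodes $i, j$, there is a path $i = v_0, v_1, \ldots, v_m = j$ in $G$, and the probability that the walk traces this particular path is $\prod_{\ell=0}^{m-1} 1/d_{v_\ell} > 0$. Hence $P^m(i,j) > 0$, every pair of states communicates, and the chain is irreducible.

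For aperiodicity I would show that for some (hence, by irreducibility, every) state $i$, the return-time set $T_i = \{t \geq 1 : P^t(i,i) > 0\}$ has greatest common divisor $1$. An even return time is immediate: since $i$ has at least one neighbor $j$ (by connectedness), the walk $i \to j \to i$ shows $2 \in T_i$. An odd return time comes from the hypothesized odd cycle $C$: pick any vertex $v$ on $C$, and fix a path in $G$ from $i$ to $v$ of some length $\ell$; then the closed walk that goes from $i$ to $v$ along this path, traverses $C$ once, and returns along the reversed path has length $2\ell + |C|$, which is odd because $|C|$ is odd. Therefore $\gcd T_i$ divides $\gcd(2, \, 2\ell + |C|) = 1$, so the period of $i$ is $1$ and the chain is aperiodic.

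The argument involves no real obstacle; it is essentially the standard equivalence between non-bipartiteness and aperiodicity of the simple random walk, and the presence of an odd cycle is exactly the classical criterion for non-bipartiteness. The only point that requires care is the bookkeeping around connectedness, which is assumed upstream but not restated in the lemma, and which both parts of the proof tacitly rely upon.
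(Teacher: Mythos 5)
Your proposal is correct and follows essentially the same route as the paper: irreducibility from connectedness, and aperiodicity by combining the length-$2$ closed walk (from traversing an undirected edge back and forth) with the odd cycle to get $\gcd=1$. The only difference is that you spell out the explicit closed walk $i \to v \to (\text{around } C) \to v \to i$ of odd length through an arbitrary vertex, whereas the paper appeals more loosely to the notion of graph-aperiodicity; your bookkeeping of the standing connectedness assumption is likewise just a more careful rendering of the same argument.
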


\begin{proof}
Irreducibility of the chain follows trivially from the fact that the underlying network is connected. To prove the irreducibility of the chain, note that since the network is undirected, it can be interpreted as a directed network with all the undirected edges being replaced by two directed edges (opposite direction). Which in turn means the presence of a cycle of length $2$. Now presence of a cycle of odd length would mean that the periodicity of the underlying graph is $1$, i.e. in parlance of graph periodicity, this network is graph-aperiodic. Now it follows from here that this underlying markov chain is also aperiodic.
\end{proof}

\begin{theorem}
 Under the condition of \Cref{lma:aperidocity}, the aforementioned markov chain has the limiting probability distribution given by the eigenvector corresponding to the eigenvalue $1$ of the matrix $B = AD$, where A is the adjacency  matrix of the network in consideration and $D = \text{diag}\{\frac{1}{|N(1)|},\dots,\frac{1}{|N(n)|}\}$.
\end{theorem}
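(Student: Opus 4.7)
The plan is to identify the transition matrix of the chain explicitly, invoke the standard convergence theorem for finite Markov chains (which is precisely what the preceding lemma sets up by certifying aperiodicity and irreducibility), and then manipulate the stationarity equation into the eigenvector equation $B\pi = \pi$.

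First I would write down the one-step transition probabilities. Since from state $i$ the chain moves to a uniformly chosen neighbor, the transition matrix is $P_{ij} = A_{ij}/|N(i)|$, which in matrix form is $P = D^{*}A$ with $D^{*} = \text{diag}(1/|N(1)|,\dots,1/|N(n)|)$. Note that $D$ in the statement of the theorem is exactly this diagonal matrix $D^{*}$. Because the chain is finite, irreducible and aperiodic by \Cref{lma:aperidocity}, the standard convergence theorem for finite Markov chains guarantees that a unique stationary distribution $\pi$ exists, and that the distribution of $X_t$ converges to $\pi$ regardless of the initial state.

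Next I would translate the stationarity equation $\pi^{\top} = \pi^{\top} P$ into the claimed eigenvector form. Taking transposes gives $\pi = P^{\top}\pi$, and since $A$ is symmetric and $D$ is diagonal,
\begin{equation*}
P^{\top} = (DA)^{\top} = A^{\top} D^{\top} = A D = B.
\end{equation*}
Hence $B\pi = \pi$, so the limiting distribution is an eigenvector of $B$ with eigenvalue $1$. One can also verify this coordinate-wise: $\pi_i = \sum_j \pi_j P_{ji} = \sum_j \pi_j A_{ji}/|N(j)| = \sum_j A_{ij}\pi_j/|N(j)| = (B\pi)_i$, which is a useful sanity check to include.

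Finally, I would address uniqueness, since any scalar multiple of an eigenvector is again an eigenvector. Uniqueness of the stationary distribution follows from irreducibility (already established), and one can further note that the Perron-Frobenius statement in \Cref{th:pf_theorem}, applied to $B$ (after perhaps appealing to its nonnegative-matrix version since $B$ has zero entries wherever $A$ does), singles out a positive eigenvector associated with the top eigenvalue, which here equals $1$ because $P$ is a stochastic matrix and $B = P^{\top}$ shares its spectrum. The main subtlety I anticipate is precisely this left-versus-right eigenvector bookkeeping: one must be careful that the stationary \emph{row} vector of $P$ corresponds to the \emph{right} eigenvector of $B = P^{\top}$, which is exactly the statement the theorem is asserting.
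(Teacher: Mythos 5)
Your proposal is correct and takes essentially the same route as the paper: the paper derives the evolution $\textbf{p}_t = B\textbf{p}_{t-1}$ with $B = AD$ acting on column probability vectors (which is exactly your $P^{\top}$ with $P = DA$ row-stochastic), and then invokes the irreducibility and aperiodicity from \Cref{lma:aperidocity} to conclude that the limit is the stationary distribution, i.e.\ the eigenvector of $B$ with eigenvalue $1$. Your additional remarks on the left-versus-right eigenvector bookkeeping and on uniqueness via Perron--Frobenius merely make explicit what the paper leaves implicit (and defers to its subsequent remark), so the substance of the argument is the same.
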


\begin{proof}
let $\textbf{p}_t \in \mathbb{R}^{n \times 1}$ be the vector of probabilities of being selected in the sample at the $t^{th}$ step. It also denotes the probability vector corresponding to the chain at time $t$. Then,
    \begin{align*}
    \centering
        &\textbf{p}_1 = \frac{1}{n}\mathbf{1}_{n}\\
        &\textbf{p}_t|j^{th} \enskip \text{node was selected at time} \enskip (t-1) = \frac{1}{|N(j)|}{\mathbf{1}^{N(j)}_n}\\
        &\text{where } {\mathbf{1}^{N(j)}_{n,k}} = \begin{cases}
            1 & \text{ if node j and k are connected}\\
            0 & \text{otherwise}
        \end{cases}\\
        &\textbf{p}_t = \sum_{j=1}^{n}\frac{1}{|N(j)|}{\mathbf{1}^{N(j)}_n}p_{t-1,j} =  \begin{bmatrix}
            \textbf{v}_1 \dots \textbf{v}_n 
        \end{bmatrix}  
        \begin{bmatrix}
            p_{t-1,1}  \\
            \vdots  \\
            p_{t-1,n} 
        \end{bmatrix}  = B\textbf{p}_{t-1} \text{,say}\\
        &\text{where } B \text{ is the transition probability matrix of this markov chain, with } \textbf{v}_j = \frac{1}{|N(j)|}{\mathbf{1}^{N(j)}_n}
    \end{align*}
      Note that using \Cref{lma:aperidocity}, the limiting distribution of the chain is the stationary distribution, which is the eigenvector corresponding to the eigenvalue $1$ of the matrix $B$.
\end{proof}

\begin{remark}
By other extensions of the Perron-Frobenius theorem, it can be shown that $1$ is the Perron-Frobenius eigenvalue of $B$. Also, the fact that $1$ is indeed the eigen value of $B$ would come from the fact that eigenvalues of $B$ and $B^T$ are the same and it can be verified easily that $B^T\mathbf{1} = \mathbf{1}$.
\end{remark}
The NEV approach proposes to select the nodes based on the eigenvector corresponding to the eigenvalue 1 of B. Similar to EV approach, in this method also, we do not focus on the regime of determining the size of the sensor set. It is taken to be the size of the sensor set by FOS approach.

Below we provide the comparative summary of the methods we have discussed so far (\Cref{tab: method_comp}).
\begin{table}[!ht]
\centering
\begin{tabular}{@{}ll@{}}
\toprule
Method & Procedure \\ \midrule
FOS & \begin{tabular}[c]{@{}l@{}}Choose a random sample and \\ include their friends in the sensor set\end{tabular} \\\midrule
EV & \begin{tabular}[c]{@{}l@{}}select the nodes based on the updated \\ scores stored in the eigenvector corresponding \\ to the largest eigenvalue of A.\end{tabular} \\\midrule
NEV & \begin{tabular}[c]{@{}l@{}}select the nodes based on the eigenvector \\ corresponding to the eigenvalue 1 of B.\end{tabular} \\ \bottomrule
\end{tabular}
\label{tab: method_comp}
\caption{Brief summary of the methods discussed above.}
\end{table}

\subsection{Estimation of parameters and the lead time}

The previous heuristic methods did not leverage any information on the disease propagation model itself. So our next approach would be to use that information in order to predict the lead time. One simple but computationally expensive way could be to run an SIR simulation based on estimated $\beta$ and $\gamma$ values to analytically determine the peak time and hence the lead time. The first and foremost step of that would be to estimate the parameters used to define the disease SIR model $\beta$ and $\gamma$. The Maximum Likelihood Estimators of these quantities cannot be derived analytically as the likelihood is very complex. So we provide simple Method of Moments type unbiased estimators of $\beta$ and $\gamma$. We first define $$I_{it} = \begin{cases}
            1 & \text{ if node i is infected at time t}\\
            0 & \text{otherwise}
        \end{cases}$$
$$S_{it} = \begin{cases}
            1 & \text{ if node i is susceptible at time t}\\
            0 & \text{otherwise}
        \end{cases}$$
$$R_{it} = \begin{cases}
            1 & \text{ if node i is recovered at time t}\\
            0 & \text{otherwise}
        \end{cases}$$
Then, define $\hat{\beta} = \frac{1}{nT}\sum_{t=2}^{T} \sum_{i=1}^{n} \frac{I_{it} S_{i,t-1}}{\sum_{j \in N(i)}I_{j,t-1}}$ and $\hat{\gamma} = \frac{1}{nT}\sum_{t=2}^{T} \sum_{i=1}^{n} {R_{it} I_{i,t-1}}$. To show the unbiasedness we use the smoothing formula of expectation,
\begin{align*}
\allowdisplaybreaks
    E(\hat{\beta}) & = \frac{1}{nT}\sum_{t=2}^{T} \sum_{i=1}^{n} E \left[ \frac{I_{it} S_{i,t-1}}{\sum_{j \in N(i)}I_{j,t-1}} \right] \\
    & = \frac{1}{nT}\sum_{t=2}^{T} \sum_{i=1}^{n} E\left[\frac{S_{i,t-1}}{\sum_{j \in N(i)}I_{j,t-1}} E(I_{it}| \mathcal{A}_{t-1})\right] \end{align*}
    where $\mathcal{A}_{t-1}$ is the sigma field generated by $\{S_{i,t-1}\}$ and $\{I_{j,t-1}\}$.
    
\begin{align*}
    \qquad \qquad & = \frac{1}{nT}\sum_{t=2}^{T} \sum_{i=1}^{n} E\left[\frac{S_{i,t-1}}{\sum_{j \in N(i)}I_{j,t-1}} P(I_{it}=1|\mathcal{A}_{t-1}) \right] \\
     \qquad \qquad & = E\left[\frac{1}{\sum_{j \in N(i)}I_{j,t-1}} P(I_{it} = 1| S_{i,t-1} = 1)\right]\\
     \qquad \qquad & = \beta
\end{align*}
And similarly, defining a sigma algebra over $I_{i,t-1}$, we can write
\begin{align*}
    E(\hat{\gamma}) & = \frac{1}{nT}\sum_{t=2}^{T} \sum_{i=1}^{n} E \left[ R_{it} I_{i,t-1} \right]\\
    & = \frac{1}{nT}\sum_{t=2}^{T} \sum_{i=1}^{n} E\left[I_{i,t-1} P(R_it = 1|I_{i,t-1} = 1)\right]\\
    & = \gamma
\end{align*}
Note that here $T$ is not the entire period of the disease propagation, rather it is the time till when we have observed the propagation, very likely a few time points after the lead time in the sensor group. However, simulations studies suggest that these estimators largely under-estimate the actual parameter values, hence we have not proceeded much further in this direction. In future work, we plan to further investigate the reasons for this, and then we would like to look at the variance of these estimators. The closed form expression could be derived from the smoothing formula on the variance operator.

\section{Simulation study}
\label{sec:sim}
\subsection{Social network models for simulation}

The degree of a node in an undirected graph is the number of edges it has. The degree distribution is the probability mass function for all the degrees, i.e., the distribution of degree we would find by picking randomly and uniformly over nodes. For our purpose, we consider two broad groups of random graphs; egalitarian or decentralized (one where the edges are more or less equally distributed across the network, that is a more or less uniform degree distribution) and authoritarian or centralized (where a central node has a much higher degree than non-central nodes) \citep{sueur2012social}. We choose the Erd\"{o}s-R\'{e}nyi and Chung-Lu models to represent the two kinds of graphs respectively. 

Erd\"{o}s-R\'{e}nyi networks are random graph with $n$ vertices where each possible edge has probability $p$ of existing. Consider a graph with $n$ nodes. The full graph will then consist of $N = {n \choose 2}$ edges, and say the set of all possible edges is $E = \{e_1, e_2, \dots, e_N\}$. In an Erdos Renyi random graph $G(n,p)$, an edge $e_i$ is picked with probability $p$ independently of occurrences of other edges. Let $X$ be the number of edges in the graph. Then, The expected number of edges in this graph is ${n \choose 2}p$.
The expected mean degree in such a network is $(n-1)p$, which is the same for all vertices, implying that it pertains to our condition of being a representative of the decentralized society.


Chung-Lu Networks are random graphs  with n vertices where each possible edge has probability $p_{ij} = \frac{w_iw_j}{\sum_{k}w_k}$ of existing between nodes $i$ and $j$, where $\{w_1,\dots,w_n\}$ is a set of weights attached the $n$ nodes. Here, for a pair (i,j) an edge $e_i$ is chosen independently with probability $p_{ij}$. The expected degree of vertex $i$ is: 
    \begin{align*}
        \sum_{j=0}^{n}\frac{w_iw_j}{\sum_{k}w_k} = w_i
    \end{align*}
The edge distribution here thus depends upon the centrality of the nodes in this graph. The weights can be modified to create star-like topology which are vital to our study.


\subsection{Disease Propagation Model}
\label{sec:disease_model}
 We used the simple network based SIR model for our simulation study. Let $G_s = (V, E_s)$ be a social network. Based on this, the disease network would progress over time as follows:

\begin{figure}[!ht]
    \centering
    \includegraphics[width = 0.75
        \textwidth]{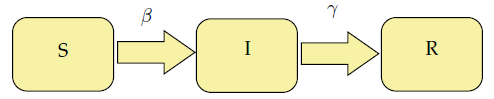}
\end{figure}

Consider the following probabilities,
\begin{align*}
        &S_i(t) = P(\text{node i is susceptible at time t})\\
        &I_i(t) = P(\text{node i is infected at time t})\\
        &R_i(t) = P(\text{node i has recovered at time t})
    \end{align*}
    
Clearly, at any given time $t$, $S_i(t)+I_i(t)+R_i(t) = 1$ should hold for all vertices.
The disease propagation on the underlying network structure can then be modelled as
\begin{align*}
    &\frac{dI_i}{dt} = \beta S_i \sum_{j}A_{ij}I_j - \gamma I_i\\
    &\frac{dS_i}{dt} = -\beta S_i \sum_{j}A_{ij}I_j, \quad \frac{dR_i}{dt} = \gamma I_i
\end{align*}
An approximate solution to the above set of differential equations is as follows:
    \begin{equation*}
        I(t) \approx \textbf{v}_1 e^{(\beta \lambda_1 - \gamma)t}
    \end{equation*}
    where $\lambda_1$ is the largest eigenvalue of the adjacency matrix A and $\textbf{v}_1$ is the corresponding eigenvector.
    
    For the ease of simulation, we consider discrete time setup with the probabilities of transition of node i from the state at time t (say $x_{i,t}$) to the state at time t+1 (say $x_{i,t+1}$) as follows:
    
    \begin{align*}
        (x_{i,t+1}\,|\,x_{i,t} = S) & = 
    \begin{cases}
      I, & \text{w.p. } \beta{\sum_{j \in \mathcal{N}(i)}\mathbb{I}(x_{jt} = I)}, \\
      S, & \text{w.p. } 1-\beta {\sum_{j \in \mathcal{N}(i)}\mathbb{I}(x_{jt} = I)}.
    \end{cases}\\
    (x_{i,t+1}\,|\,x_{i,t} = I) & = 
    \begin{cases}
      R, & \text{w.p. } \gamma, \\
      I, & \text{w.p.} 1-\gamma.
    \end{cases}\\
    (x_{i,t+1}\,|\,x_{i,t} = R) & = 
    \begin{cases}
      R, & \text{w.p. } 1.
    \end{cases}
    \end{align*}
    
\subsection{Simulation set-up}

\subsubsection{Algorithm for simulation}

We run a simulation study to compare among the performances of the three approaches discussed above to select sensors who would give a lead time before an epidemic peaks in the population on the whole. Our sensor group selection was mainly driven by finding and choosing nodes that are more and more central to the graph. To see this, we generated a social contact network with $n=1000$ individuals in the population. Each individual is meant to represent a node in the network and an edge between any two nodes represent mutual contact between the two individuals. We generated the Erd\"{o}s-R\'{e}nyi Model (ER) with probability of an edge $p = 0.005$, which leads to $m \approx 2500$ edges spread more or less uniformly across the entire network. We also generated the Chung Lu Network (CL) with $n = 1000$ and $m = 2500$, to keep parity with the ER Graph. However, the network structure here is star-like. We considered $T = 10000$. Every node at any time step is in one of the states $\{S, I, R\}$. We then initialized 10 nodes in state $I$ at $T = 1$. On every time step, each $I$ node has a probability $\beta = 0.001$ to infect its neighbours ($S$ $\rightarrow$ $I$) and $\gamma = 0.001$ to recover ($I$ $\rightarrow$ $R$). Once the disease propagated through the entire network over $T = 10000$ time-points following the defined SIR model, we selected a random sample of $10$ individuals from the entire network and select their friends (neighbours) in the FOS sensor group. For the EV and NEV approaches, the sensor group size was determined by the number of neighbours included in the FOS sensor group. The same was repeated for FOS, EV and NEV approaches beginning with $20$ random samples from the network.

\subsubsection{Parameters of the simulation study}
Here, we study how the peak times differ if we vary $\beta$ and $\gamma$ values. The following three cases are possible. Firstly, $\beta > \gamma$ where the rate at which the infection spreads is faster than the recovery rate. We take $\beta = 0.005$ and $\gamma = 0.002$. Next, we can have $\beta = \gamma$. Although, we already demonstrated the results for this case, we would like to see the differences, if any, when the rates are increased, so that the disease propagation is faster than before. A higher value of $\beta$ would ensure that the infection spreads quickly so it achieves the peak quite early and a larger $\gamma$ means that they recover quickly too. For this we choose $\beta = \gamma = 0.005$. And lastly, we can have the case when infection rate is slower than recovery, that is, $\beta < \gamma$, for which we take $\beta = 0.002$ and $\gamma = 0.005$.

\subsubsection{Estimation of Time to Peak}
We demonstrate the estimation of the population peak time of infection by the EV and NEV approaches for Erd\"{o}s-R\'{e}nyi Model (ER) and the Chung Lu Network (CL) with initially randomly assigning the state of infection to 10 individuals. We do this by regressing the cumulative infection per unit time for the EV (or NEV) sensor groups that are formed based on the number of neighbours of 20 randomly selected nodes to the population cumulative infection per unit time. A cubic degree polynomial is fitted and then the peak time for the population is predicted as per this regression. We use the data from the first few days (upto 100 time points after the peak in the sensor group) to estimate our polynomial regression model and make predictions about the cumulative incidence of the population for the first time point to the next 500 units of time after the sensor group peaked.

\subsection{Results}

\begin{figure}[!ht]
        \centering
        \includegraphics[width = 0.9
        \textwidth]{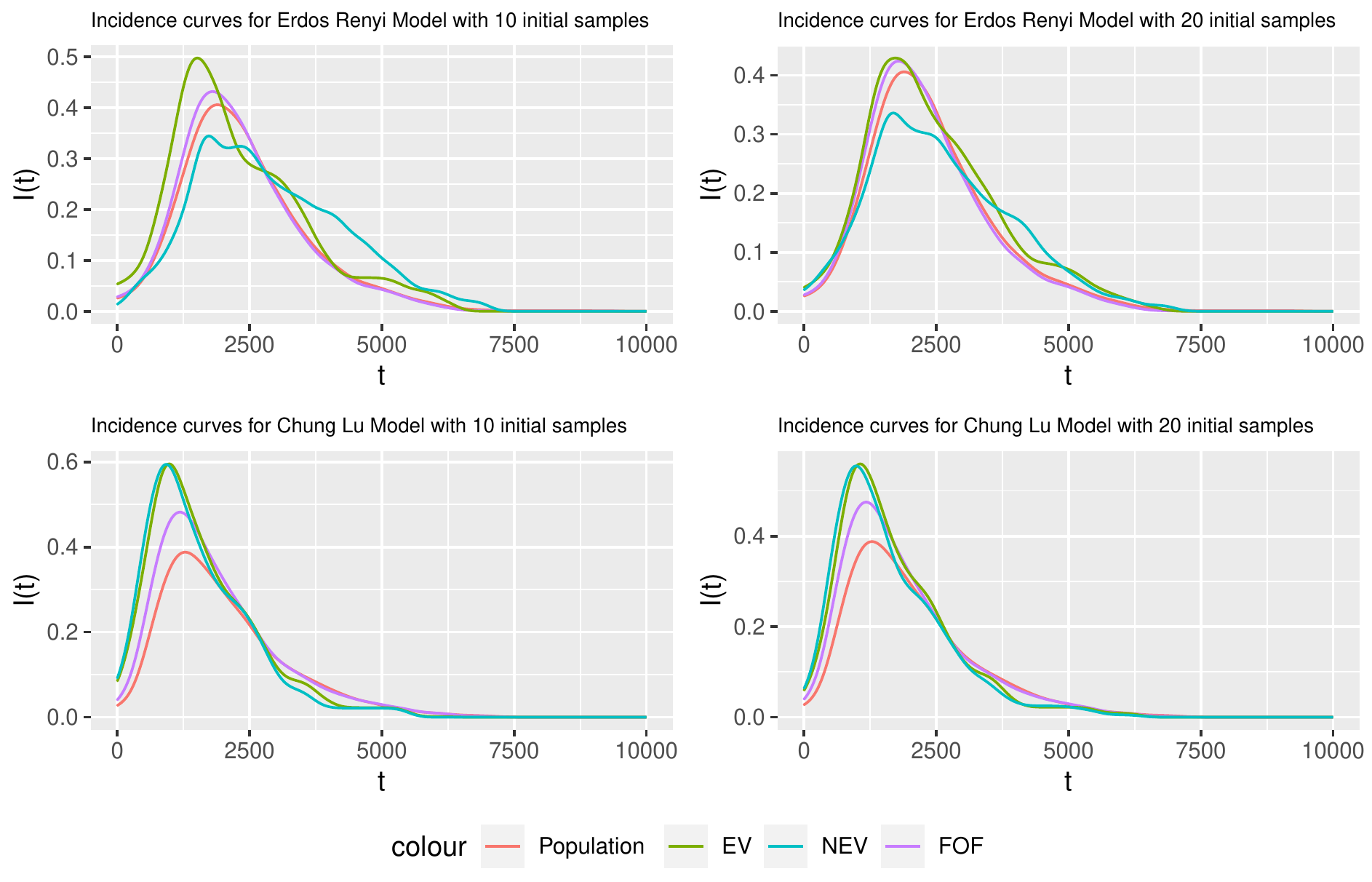}
        \caption{Plot of Incidence Curves for the ER Model and CL Model with 10 initial random samples (\textit{left}) and 20 initial random samples (\textit{right})}
        \label{fig:curves}
    \end{figure}

From the plots in {Figure} \ref{fig:curves} and also the results in \cref{tab:my-table}, we can see that EV and NEV approaches to sensor group selection give a lead time ahead of the FOS approach , which in turn peaks before the entire network on the whole. \cite{shao2016forecasting} had  noted that networks with star-like topology where a few of the central nodes have very large degrees, perform relatively better under the FOS approach as this graph structure facilitates inclusion of central nodes with high degrees to form the sensor group. However, our proposed methods give greater lead times under such star-like graphs generated by the Chung-Lu model. Moreover, we also noted that the lead time difference between the FOS and EV/NEV methods are more pronounced when smaller random samples are chosen to begin with. Having said that, henceforth we provide results for the cases where we begin with larger (20) initial samples, hoping that the corresponding outcomes would be better if smaller (10) samples were initially chosen. 

\begin{table}[!ht]
\centering
\caption{Peak Times of the three approaches compared to that of the whole population for the Erdos Renyi and Chung Lu Graph structures with initial sample sizes 10 and 20}
\label{tab:my-table}
\begin{tabular}{@{}ccccc@{}}
\toprule
\multirow{2}{*}{\begin{tabular}[c]{@{}c@{}}Network \\ Model\end{tabular}} & \multicolumn{4}{c}{Peak Time} \\ \cmidrule(l){2-5} 
 & Population & FOS & EV & NEV \\ \midrule
ER 10 & 1892 & 1801 & 1514 & 1730 \\
ER 20 & 1892 & 1787 & 1726 & 1687 \\
CL 10 & 1284 & 1185 & 983 & 925 \\
CL 20 & 1284 & 1174 & 1062 & 985 \\ \bottomrule
\end{tabular}
\end{table}

Next, for the different values of infection and recovery rates, we present the peak times for the four groups in \ref{tab:diff_rates}. Our proposed methods not only work better than the exsiting FOS approach under the star-like Chung-Lu Model under all the three variations in the values of thr rate parameters, but also perform comparably under the Erd\"{o}s-R\'{e}nyi Network. 

\begin{table}[!ht]
\centering
\caption{Peak Times of the three approaches compared to that of the whole population for the Erdos Renyi and Chung Lu Graph structures with different rate parameters.}
\label{tab:diff_rates}
\begin{tabular}{@{}lcccc@{}}
\toprule
\multirow{2}{*}{\begin{tabular}[c]{@{}c@{}}Network \\ Model\end{tabular}} & \multicolumn{4}{c}{Peak Time} \\ \cmidrule(l){2-5} 
 & Population & FOS & EV & NEV \\ \midrule
ER 20 ($\beta = 0.005, \gamma = 0.002$) & 446 & 426 & 445 & 446 \\
ER 20 ($\beta = \gamma = 0.005$) & 366 & 354 & 357 & 357 \\
ER 20 ($\beta = 0.002, \gamma = 0.005$) & 1235 & 1213 & 894 & 1106 \\
\midrule
CL 20 ($\beta = 0.005, \gamma = 0.002$)
 & 346 & 292 & 240 & 217 \\ 
CL 20 ($\beta = \gamma = 0.005$)
 & 358 & 320 & 279 & 269 \\
CL 20 ($\beta = 0.002, \gamma = 0.005$)
 & 511 & 443 & 412 & 367 \\\bottomrule
\end{tabular}
\end{table}

Finally, we report the estimated peaks from the sensor groups. The polynomial regression of degree three fits the data quite well and {Figure} \ref{fig:est1} shows that the estimated peak time by the EV sensor group for the Erd\"{o}s-R\'{e}nyi Model is 2000, whereas the population actually peaks on the time point 1892. However, for the Chung Lu Model, the EV sensor group estimates the peak time to be 1144, whereas the population actually peaks on 1284. The absolute peak time difference for the ER model is 108 and that for the CL model is 140. We note here that the simulation was conducted over 10000 time points and so estimation of  the margin of error of the peak time is acceptable here. 

\begin{figure}[!ht]
        \centering
        \includegraphics[width = 0.9
        \textwidth]{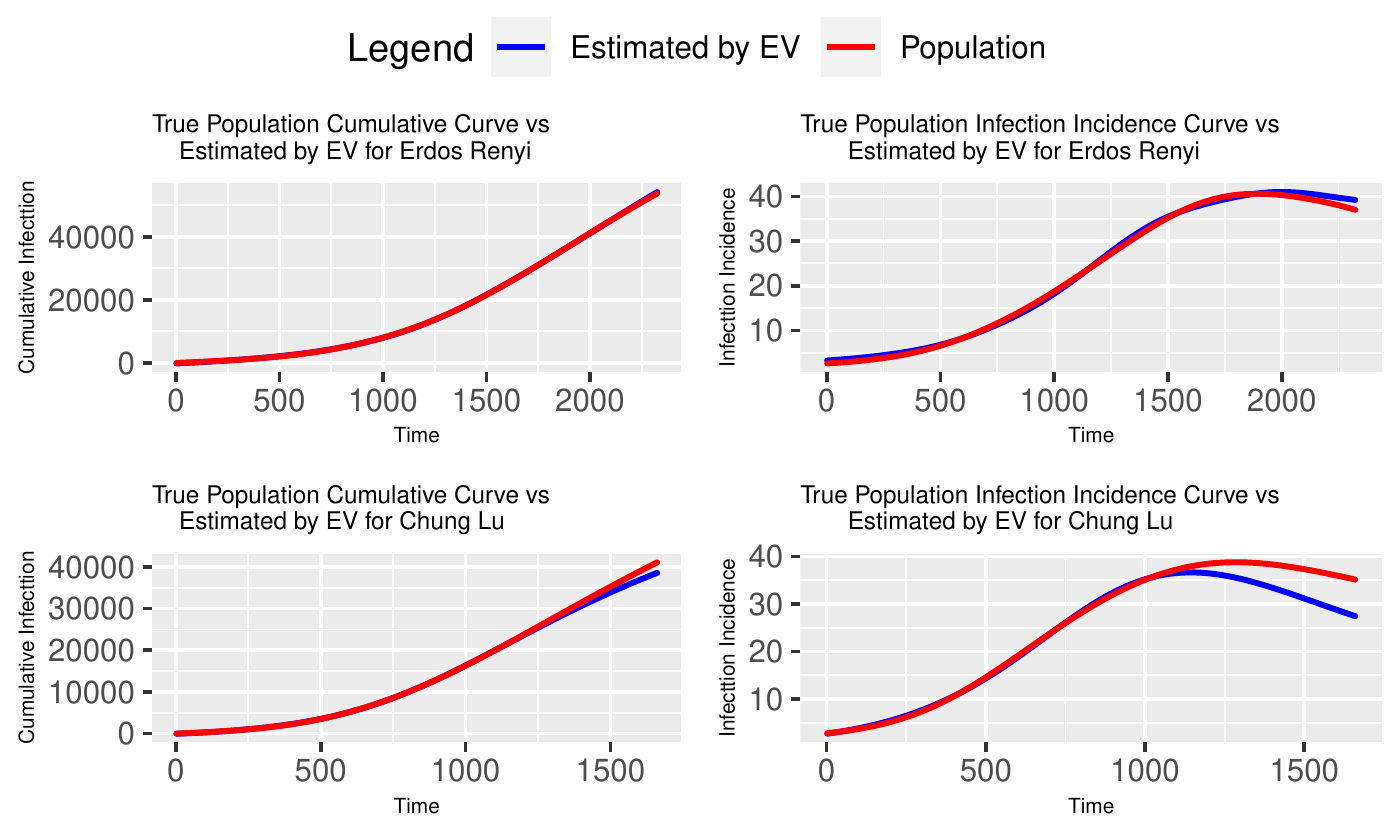}
        \caption{Plot of Incidence Curves for the ER Model (\textit{above}) and CL Model (\textit{below}) comparing the peak time of the population and the peak time estimated by the EV sensor group. }
        \label{fig:est1}
    \end{figure}

For the NEV sensor group with initially 20 infected individuals, we notice from {Figure} \ref{fig:est2} that for the Erd\"{o}s-R\'{e}nyi Model, the estimated peak time is 1588, which is 304 time points ahead of the actual time to peak by the population 1892. Again, for the Chung Lu Network, the estimated peak time is 1259, whereas the population peak time is 1284, resulting in a margin of error of 25. The NEV approach is seen to work better for star-like graph structures and that explains why the error in estimation under Chung Lu Model is lower. 

\begin{table}[!ht]
\centering
\label{tab:bias}
\begin{tabular}{@{}cccccc@{}}
\toprule
\multirow{2}{*}{\begin{tabular}[c]{@{}c@{}}Network \\ Model\end{tabular}} & \multicolumn{5}{c}{Peak Time} \\ \cmidrule(l){2-6} 
 & Population & \multicolumn{1}{p{2cm}}{\centering Estimated \\  by EV} &  \multicolumn{1}{p{2cm}}{\centering Bias due \\  to EV} & \multicolumn{1}{p{2cm}}{\centering Estimated \\  by NEV} &  \multicolumn{1}{p{2cm}}{\centering Bias due \\  to NEV} \\ \midrule
ER 20 & 1892 & 2000 & 108 & 1588 & 304 \\
CL 20 & 1284 & 1144 & 140 & 1259 & 25 \\ \bottomrule
\end{tabular}

\caption{Estimated Peak Times of the two approaches compared to that of the whole population for the Erdos Renyi and Chung Lu Graph structures with initial infected samples of 20.}
\end{table}

\begin{figure}[!ht]
        \centering
        \includegraphics[width = 0.9
        \textwidth]{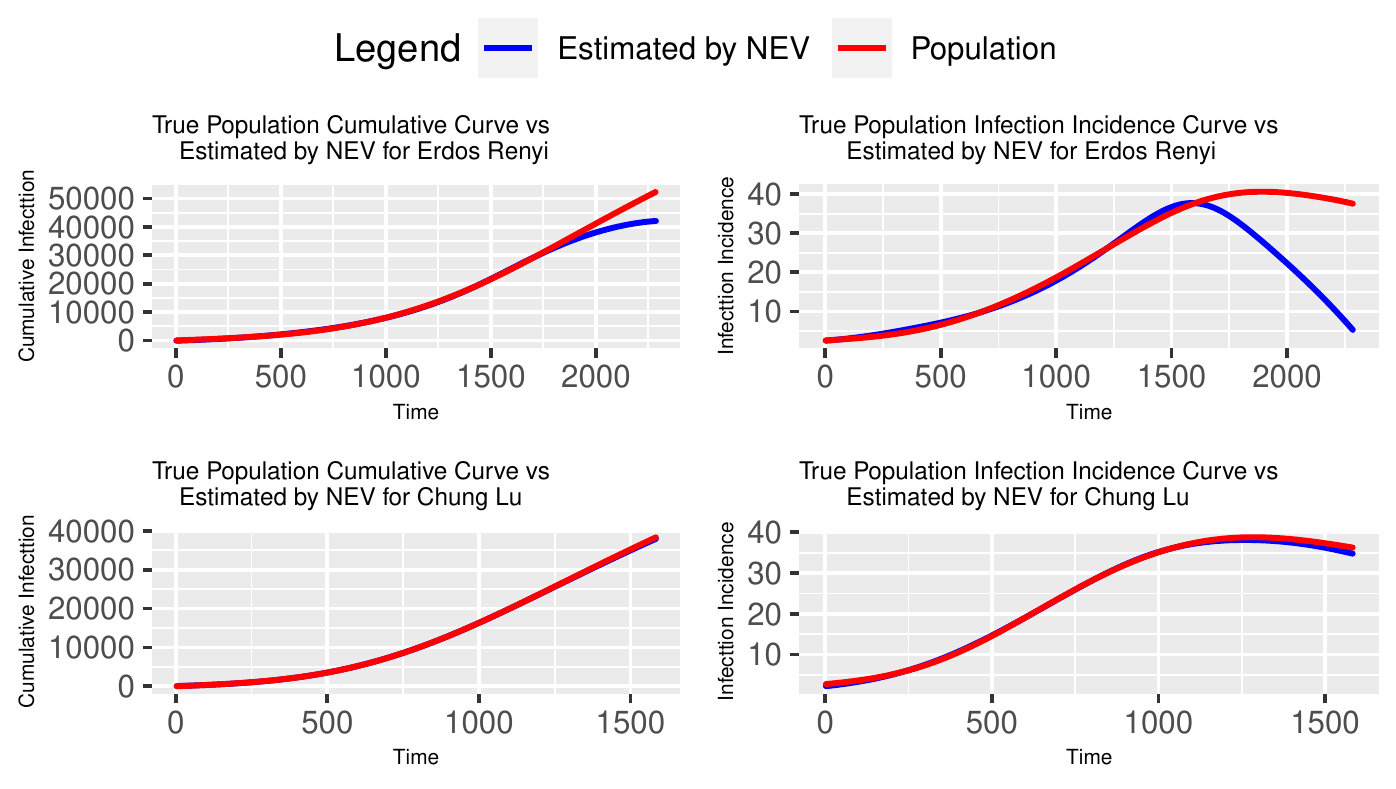}
        \caption{Plot of Incidence Curves for the ER Model (\textit{above}) and CL Model (\textit{below}) comparing the peak time of the population and the peak time estimated by the NEV sensor group. }
        \label{fig:est2}
    \end{figure}

\section[Real Data Analysis]{Sensor set selection based on contact patterns in a village in rural Malawi}
\label{sec:realdata}


 \begin{figure}[!ht]
        \centering
        \includegraphics[width = 0.85
        \textwidth]{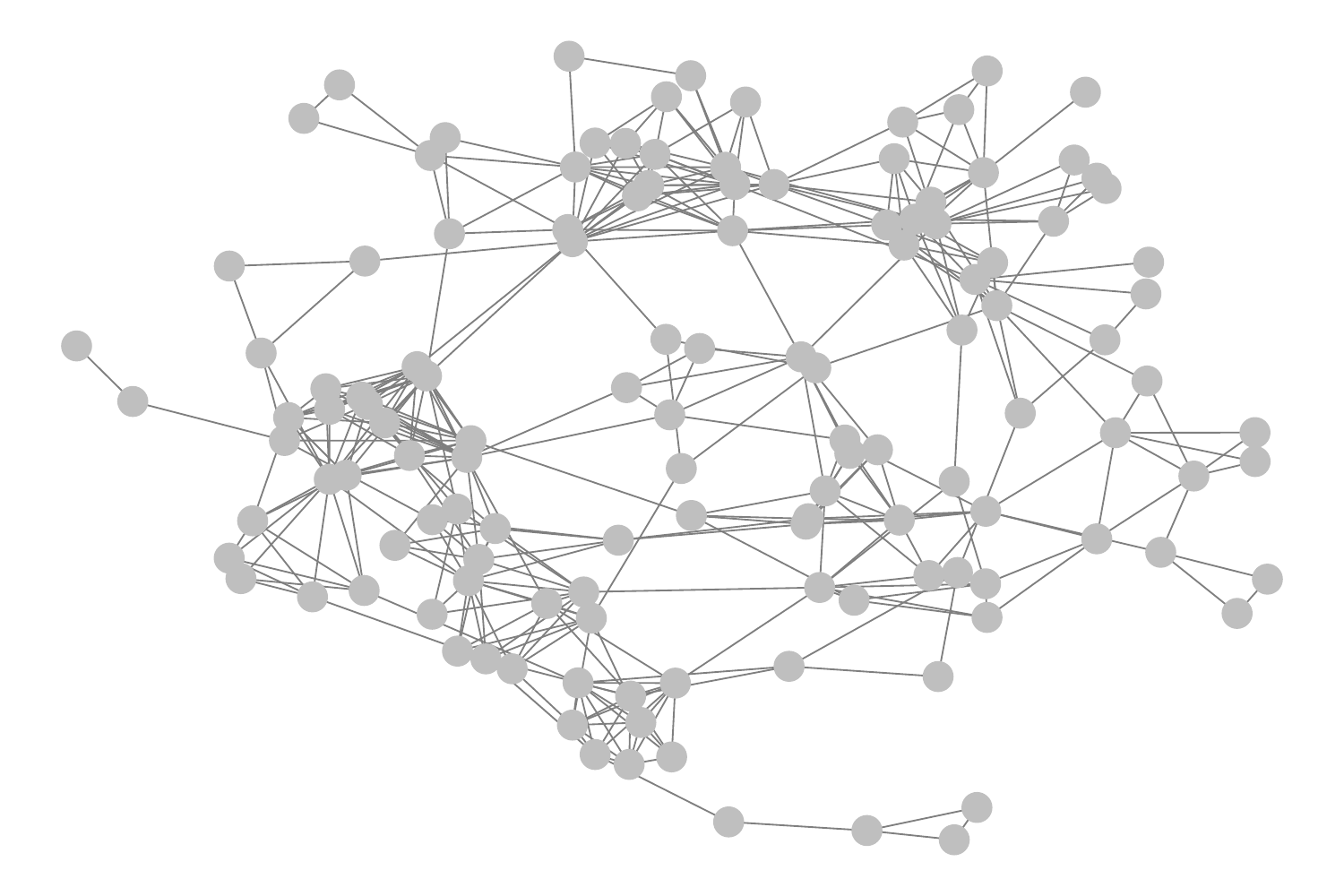}
        \caption{Connected part of the contact network of rural Malawi.}
        \label{fig:real}
    \end{figure}


\begin{figure}
\centering
  \includegraphics[width = 0.45 \textwidth]{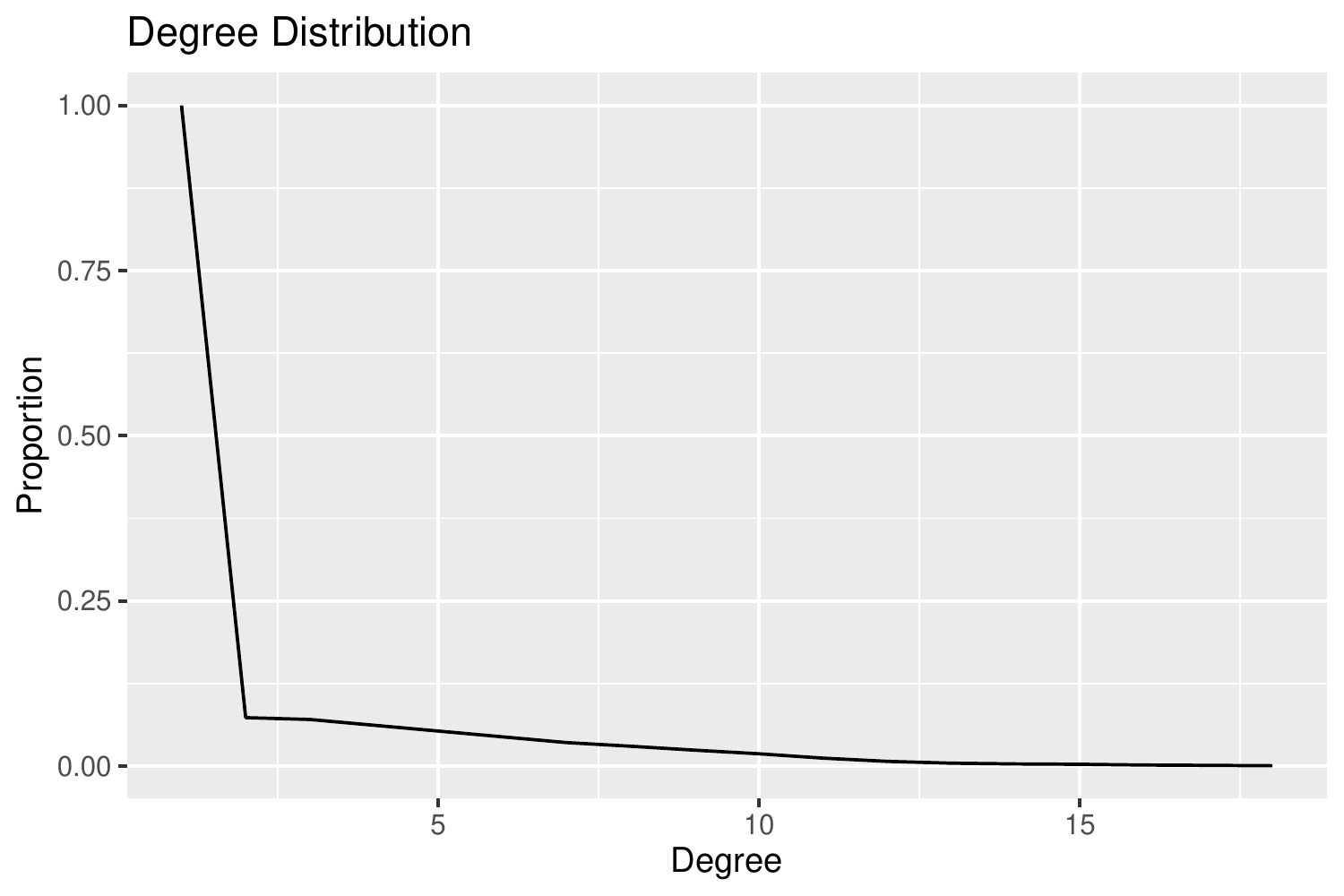}
  \includegraphics[width = 0.45
        \textwidth]{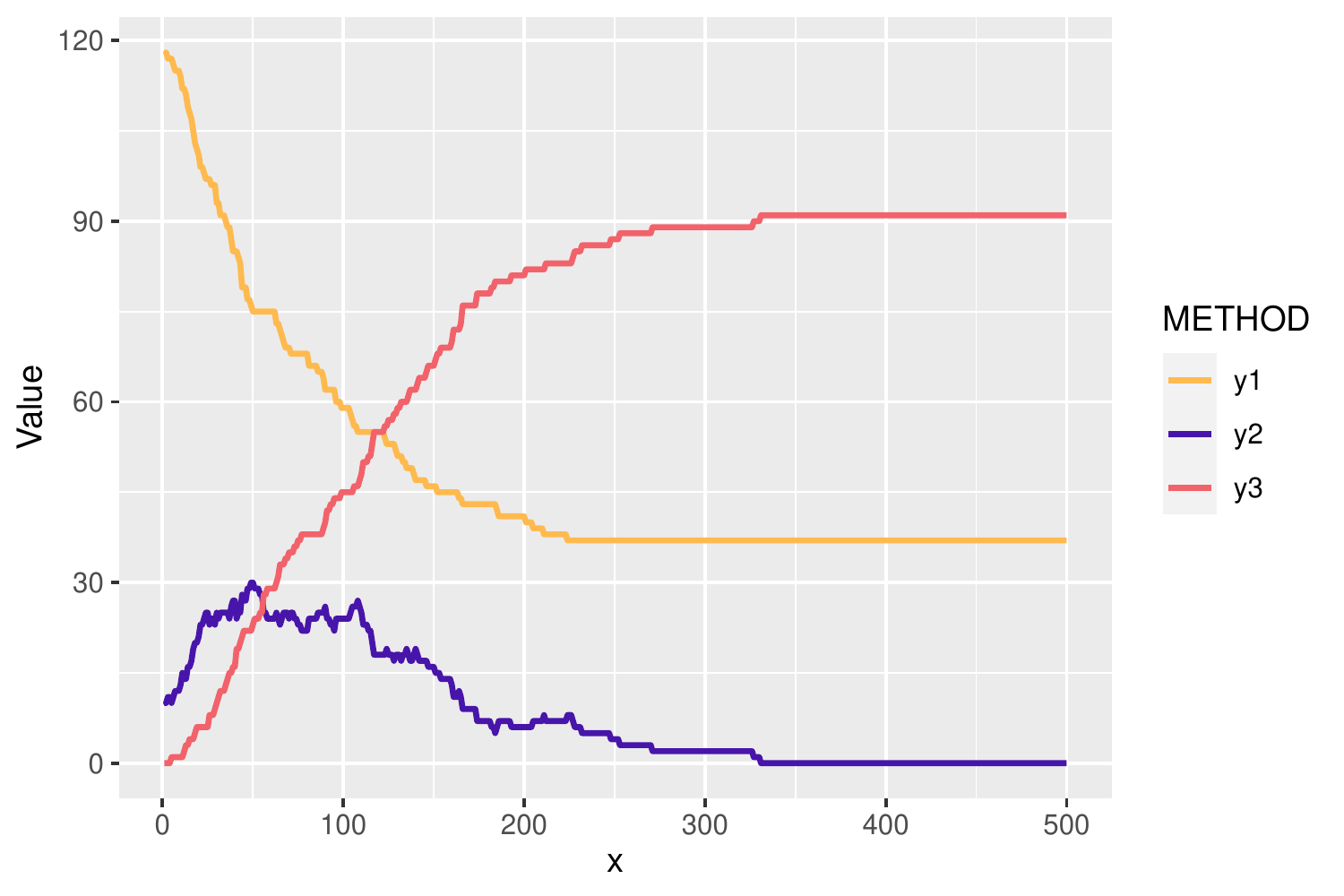}
        \caption{Left: Degree Distribution of the entire contact network of rural Malawi. Right: Incidence Curves for the connected part of the contact network of rural Malawi following the simulated SIR disease propagation model.\label{fig:inci}}
        
\label{fig:test}
\end{figure}

We now report results from a real-world social contact network recently curated by \cite{ozella2021using}.
The data was collected in Mdoliro village, which is a small settlement (147 people living in 32 homes in 2019) in Dowa district, Malawi's Central Region,
between the 16th of December 2019 and the 10th of January 2020.
Two people are defined to have a 'contact event'  when the proximity-sensing devices exchanged at least one radio packet in a 20-second period. However, for the EV and NEV methods, we only consider the connected component of this network, that includes 128 nodes and 401 edges. The fraction of nodes in a network of degree $k$ is defined as the network's degree distribution $P(k)$. From the left panel of {Figure} \ref{fig:inci}, we see that the degree distribution follows the power law. This is indicative of a few nodes being highly connected to other nodes in the network. The degree distribution $P(k)$ decays slowly as the degree $k$ increases, increasing the likelihood of finding a node with a very large degree.


We now propagate our simulated SIR model through the connected portion of this network using $\beta = 0.025$ and $\gamma = 0.045$ for 500 time points. We begin by randomly choosing 10 nodes that are infected at the first time point. The Incidence curves of infectious, susceptible and recovered individuals over time is for this network is given in the right panel of {Figure} \ref{fig:inci}. Next, we select 20 individuals at random and select their immediate neighbours to form the FOS sensor group. Based on this FOS sensor group size, the EV and NEV sensor groups are selected and the disease model is propagated through the four different sets of individuals and their infection curve is displayed in {Figure} \ref{fig:real2}. The EV and NEV sensor groups reach the height of infection before the population, and more importantly, before the FOS sensor group as well. The times to peak and lead times are tabulated below in \ref{tab:tab_real}. The NEV sensor group peaks much ahead of the population and the EV sensor group still gives some lead time, but the FOS sensor group peaks after the peak time of the population. 
\begin{table}[]
\centering
\caption{Lead Times by the 3 sensor groups}
\label{tab:tab_real}
\begin{tabular}{@{}lcc@{}}
\toprule
{ \textbf{Group}} & {\textbf{Peak Time}} & {\textbf{Lead Time}} \\ \midrule
Population & 62 & -  \\
FOS Sensor & 65 & -3 \\
EV Sensor  & 56 & 6  \\
NEV Sensor & 41 & 21 \\
\bottomrule
\end{tabular}
\end{table}
 \begin{figure}[!ht]
        \centering
        \includegraphics[width = 0.8
        \textwidth]{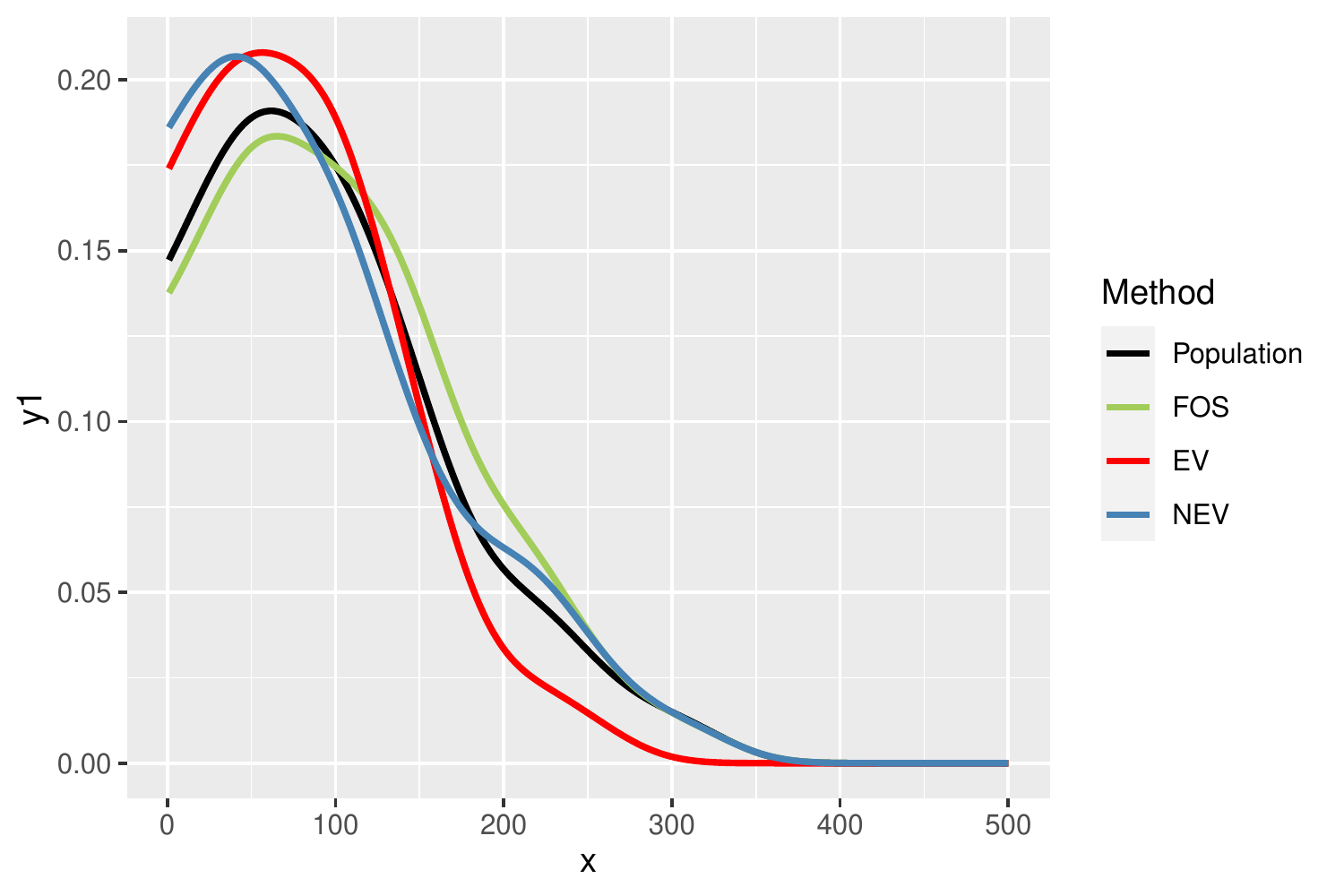}
        \caption{Infection Curves comparing Population with FOS, EV and NEV sensor groups.}
        \label{fig:real2}
    \end{figure}

\section{Conclusion and future directions}
\label{sec:disc}

In this write-up we have presented our findings based on the simulation studies, the goal of which was to compare the existing method (FOS) by \cite{fowler} and the ones proposed by us. The results from the simulation studies suggests that our proposed methods perform better than the FOS method. Although we do not have any rigorous theoretical justification in this regard, we may provide some intuitive explanation for this. Firstly, the EV centrality is a more robust approach of centrality, since it takes into account more information, which is the importance of the friends of a node, while determining its centrality score. Moreover, the importance of EV centrality notion is also reflected in the approximate solution of the incidence curve under network SIR model as shown in \Cref{sec:disease_model}. On the other hand, the NEV approach could be considered to be an extension of the method used in FOS approach as it was discussed previously. This intuitively suggests that the NEV approach suggests towards more central nodes, and in turn improves the peak time. We extended our simulation study to several choices of the rate parameters used in the disease generation model and noticed that our proposed methods work much better than FOS under star-like network structures. We also used the cumulative infections from the EV and NEV sensor groups to estimate the infection curve of the population for both ER and CL network models using a cubic polynomial regression. This would definitely help in estimating the population time to peak and consequently give us an estimate of the lead time that each of the methods provide. Finally, we analyse a real contact-network data from a village in rural Malawi and show that NEV works the best. Some future directions could be to extend our methods to the scenario where the disease progression depends on random-mixing along with the underlying social network structure. Also, note that in order to make our proposed methods objective, one should come up with a method to decide the optimal cardinality of the sensor set, which would require an extensive simulation study to get an idea on how the lead time changes with the cardinality of the sensor set.


\bibliography{references,ref}

\end{document}